\def\ps@pprintTitle{%
 \let\@oddhead\@empty
 \let\@evenhead\@empty
 \def\@oddfoot{\footnotesize \it \hfill\today}%
 \let\@evenfoot\@oddfoot}
\renewcommand{\epsilon}{\varepsilon}
\newcommand{\esp}{\mathbb{E}}
\newcommand{\dd}{\mathrm{d}}
\newcommand{\ESS}{\operatorname{ESS}}
\newcommand{\KL}{\operatorname{KL}}
\newcommand{\calQ}{\mathscr{Q}}
\newcommand{\ds}{\displaystyle}
\newtheorem{theo}{Theorem}
\newtheorem{lem}[theo]{Lemma}
\newtheorem{pro}[theo]{Proposition}
\theoremstyle{definition}
\title{Tempered, Anti-trunctated, Multiple Importance Sampling}
\author[1]{Grégoire Aufort \fnref{LAM,I2M}}
\ead{gregoire.aufort@lam.fr}
\author[2]{Pierre Pudlo \fnref{I2M}\corref{cor1}}
\ead{pierre.pudlo@univ-amu.fr}
\author[1]{Denis Burgarella \fnref{LAM}}
\ead{denis.burgarella@lam.fr}
\affiliation[1]{organization={Aix Marseille Univ, CNRS, CNES, LAM},
  addressline={38 rue Frédéric Joliot-Curie},
  postcode={13388},
  city={Marseille CEDEX 13},
  country={France}}
\affiliation[2]{organization={Aix Marseille Univ, CNRS, I2M},
  addressline={29 rue Frédéric Joliot-Curie},
  postcode={13453},
  city={Marseille CEDEX 13},
  country={France}}
\date{\today}
\begin{document}

\begin{abstract}
  Importance sampling is a Monte Carlo method that introduces a proposal distribution to sample
  the space according to the target distribution. Yet calibration of the proposal distribution
  is essential to achieving efficiency, thus the resort to adaptive algorithms to tune this
  distribution. In the paper, we propose a new adpative importance sampling scheme, named
  Tempered Anti-truncated Adaptive Multiple Importance Sampling (TAMIS) algorithm. We combine a
  tempering scheme and a new nonlinear transformation of the weights
  we named anti-truncation.
  For efficiency, we were also concerned not to increase the number of evaluations of the
  target density. As a result, our proposal is an automatically tuned sequential algorithm that
  is robust to poor initial proposals, does not require gradient computations and scales well
  with the dimension.
\end{abstract}

\begin{keyword}
  importance sampling \sep tempering \sep clipping \sep high dimension
\end{keyword}

\maketitle

\section{Introduction}

Importance sampling is a Monte Carlo method that predates Markov Chain Monte Carlo (MCMC).  It was
and is still used to sample distributions. importance sampling targets $\pi(x)$ with draws from the
proposal distribution $q(x)$. A draw $x$ is weighted with $\pi(x)/q(x)$ to correct the discrepancy
between $q$ and $\pi$. When $\pi \ll q$, these algorithms are unbiased. Moreover, when the density
of the target $\pi(x)$ is known up to a constant, we normalized the weights by their sum, which
introduce a small bias that has been well studied \citep[see, e.g.][]{robert1999monte}.  Unlike
MCMC, importance sampling is an embarrassingly parallel algorithm that can easily be distributed on
CPU cores or clusters.  Moreover, importance sampling does not require to sort the wheat from the
chaff by finding the limit of the warm-up or burn-in period. And, since it is not based on local
moves, it may be able to discover the different modes of the target.  It has therefore received a
recent interest, in particular when considering algorithms that calibrates the tuning parameters of
the algorithm to the target \citep{AIS_bugallo}.

The efficiency of importance sampling depends heavily on the choice of the proposal. Many adaptive
algorithms \citep{oh1992adaptive, oh1993integration} have been proposed to calibrate the proposal
based on past samples from the target. Thus a temporal dimension is introduced in these algorithms
to adapt the tuning parameters of the proposal distribution: at time $t$, draws $x$ are sampled
from a distribution $q_t(x) = q(x|\theta_t)$ whose parameter $\theta_t$ is adapted on past
results. However these algorithms suffer from numerical instability and sensibility to the first
proposal used at initialization. For instance, \citet[][Section 2.6]{liu2001monte} claimed that
such algorithms were unstable.  Indeed estimating large covariance matrices from weighted samples
can lead to ill-conditioned estimation problems \citep[see, e.g., ][]{ellaham:hal-02019041}.  And
\citet{AMIS} asserted that the initial distribution of their algorithm has a major impact on the
accuracy of adaptive algorithms. They talked about the ``what-you-get-is-what-you-see'' nature of
such algorithms: these methods have to guess which part of the space is charged by the target based
on points of this space that have been previously visited. Several schemes have been introduced to
initialise the first proposal distribution. The initialization method proposed by \citet[][Section
4]{AMIS} requires multidimensional simplex optimization, hence requires many evaluations of
$\pi(\theta)$ that are then discarded. On the other hand, \citet{beaujean2013initializing} runs a
complete Metropolis-Hastings algorithm that can miss several modes of the target since it is based
on local moves.

Numerical instability may come from the fact that the adaptive algorithm can be trapped
around a point of the space that better fits the target than previously visited
points. When such phenomenon occurs, the algorithm misses important parts of the core of
the target: the learnt proposal distribution becomes concentrated around this point, and
the rest of the space to sample is eliminated forever. When the space to sample is of
moderate or large dimension, numerical instability becomes a major problem. Many ideas
were proposed to tackle the issue including tempering and clipping \citep{AIS_bugallo}.
Tempering \citep{SAIS, korba} can be implemented as replacing the target $\pi(x)$ by $\pi(x)^\beta$,
with $\beta<1$. It eases the discovery of the core of the target since it extends
the part of the space that is charged by the target. Thus, tempering can smooth the bridge
from the first proposal $q_1(x)$ to the target $\pi(x)$.
Clipping \citep{truncated,Koblents,vehtari2021pareto} of the importance weights is a non linear
transformation of the weights that decreases the importance of points with high
$w(x) = \pi(x)/q_t(x)$. The most common way to implement clipping as a variance
reduction method (which introduce a bias) is the truncation that deals with the degeneracy as
follows.  If $w(x)>S$ where $S$ is a threshold that needs to be calibrated, the weights $w(x)$ are
replaced by some value (e.g., by $S$). Otherwise, they are left unchanged. As noted by
\citet{Koblents} and \citet{AIS_bugallo}, this transformation of the weights flattens the target
distribution. Therefore, truncation is redundant with tempering.
Finally, in order to increase computational efficiency, schemes have been introduced to recycle
the successive samples generated at every iterations. In this vein, \citet{AMIS,MAMIS}
considered the whole set of draws from the different proposals calibrated at each stage of the
algorithm as drawn from a mixture of these distributions to significantly increase their
efficiency.

In this paper, we propose an adaptive importance sampling whose sensitivity to the first proposal,
and numerical instability are highly reduced. We have tried to design our algorithm to keep control
on the number of evaluations of the (unnormalized) target density.  In many situations where we are
interested in sampling the posterior distribution, the target density is indeed a complex function
of the paramaters $x$ and the data. For instance, an extreme case is a Gaussian model whose average
$\mu(x)$ is a blackbox function which carries a physical model of the reality given the value of
the parameters $x$.  Thus, the time complexity of our algorithm should be assessed in number of
evaluations of the proposal density.
We relied on a simple form of tempering to adapt the proposal distribution. Nevertheless tempering
was not enough to stabilize the algorithm on spaces of large dimension. In our algorithm, at each
stage after initialization, we update the proposal using tempering, and an anti-truncation that
replaces all weights $w(x)$ lower than a threshold $s$ by $s$. We show that this kind of clipping
can be considered as a contamination of the current proposal with the previous one.  As exhibited in
the numerical simulations in the last Section, both tricks (tempering and contamination with
previous proposal) avoid focusing too quickly on the few points with high weights. At
least, our method keeps the variance of the proposal large enough to take time to explore the space to
sample before exploiting the points with high importance weights.

\section{Calibration of importance sampling}
\label{sec:calibration}

We propose here a new strategie to walk on the bridge from the first proposal $q_1(x)$ to a
proposal $q_T(x)$ well adapted to the target $\pi(x)$ in terms of effective sample size.  In
order to adapt the proposal gradually, we introduce a sequence of temporary targets:
\[
  \widehat \pi_1(x), \ldots, \widehat \pi_T(x)
\]
which are intermediaries between the first proposal $q_1(x)$ and the target
$\pi(x)$. 
The precise definition of these temporary targets, given in
Section~\ref{sec:temporary-targets}, is paramount to the succes of the algorithm. They are based
on a tempering $w^\beta$ of the importance weights $w$. As described in
Section~\ref{sec:tempering}, the tempering 
\begin{enumerate}[label=\emph{(\roman*)}, noitemsep]
\item eases the discovery of the area charged by the real target $\pi(x)$,
\item temporarily removes the problems due to large queues of the target,
\item allows us to design a diagnostic based on the final of $\beta$.
\end{enumerate}
To this non-linear transformation of the weights, we add an anti-truncation, defined as
$\widehat w^\beta=w^\beta \vee s$, that pulls up all tempered weights $w^\beta$ less than a
threshold $s$ to this single value, see Section~\ref{sec:temporary-targets}.
This anti-truncation 
\begin{enumerate}[resume, label=\emph{(\roman*)}, noitemsep]
\item performs a contamination of the temporary target by the last proposal,
\item helps to stabilize numerically the algorithm and
\item allows us to explore new directions in large-dimension spaces.
\end{enumerate}
Both $\beta$ and $s$ are automatically calibrated at the end of stage $t$ of the algorithm, as
explained in Section~\ref{sec:auto-calibration}.  The new proposal $q_{t+1}(x)$ is tuned to fit
the temporary $\widehat\pi_t(x)$ with the EM algorithm as given in Section~\ref{sec:updating}.
The whole algorithm is given in Figure~\ref{fig:algorithm}.

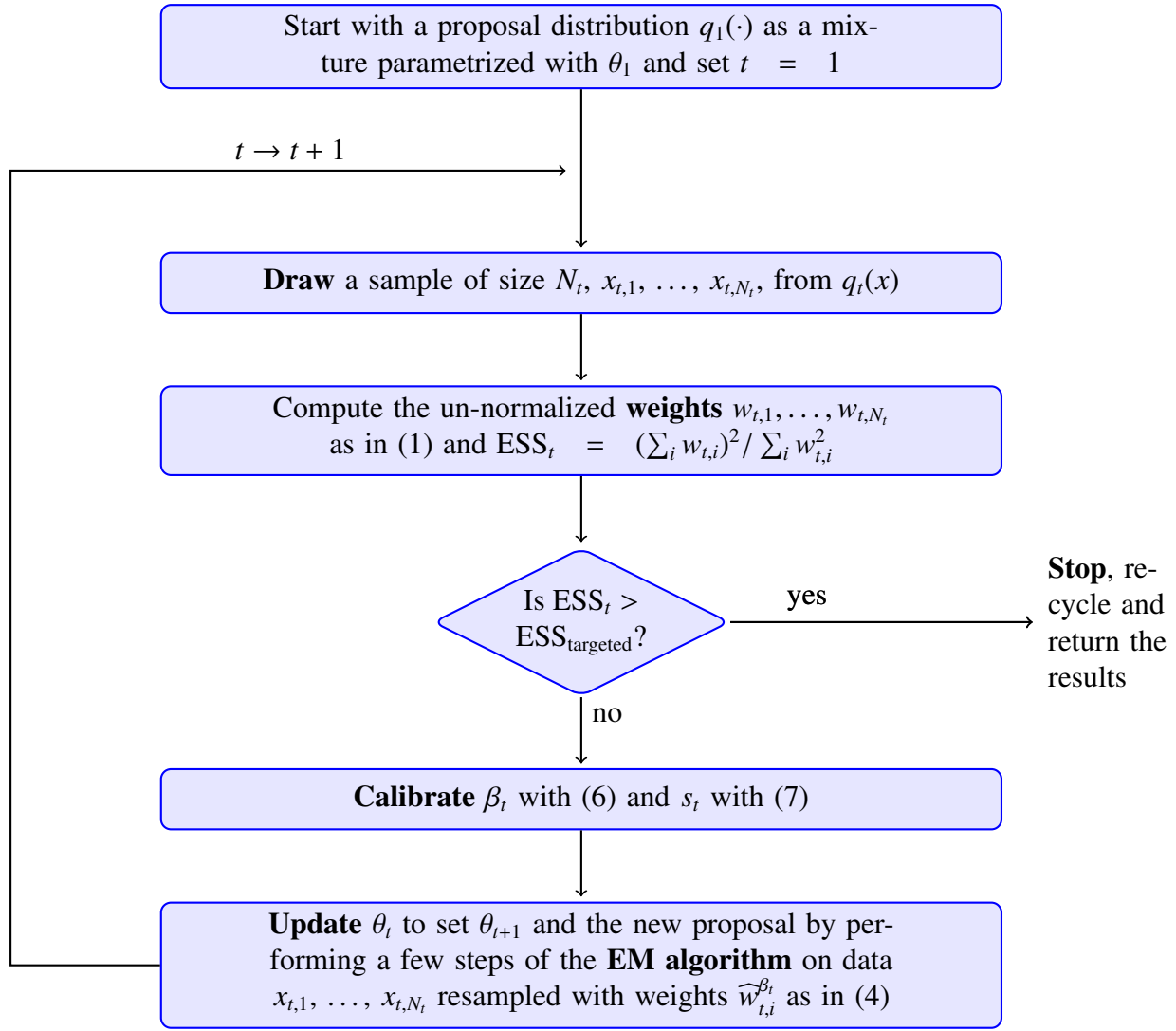
\begin{figure}
  \centering
   \begin{tikzpicture}[auto,
  decision/.style = { diamond, aspect=2, draw=blue, thick, fill=blue!10,
    text width=5em, text badly centered,
    inner sep=1pt, rounded corners },
  block/.style    = { rectangle, draw=blue, thick, 
    fill=blue!10, text width=11.5cm, text centered,
    rounded corners, minimum height=2em },
  line/.style     = { draw, thick, ->, shorten >=2pt },
  ]
  \matrix [column sep = 5mm, row sep = 10mm] { %
    & \node [block] (start) {Start with a proposal distribution $q_1(\cdot)$ as a mixture
      parametrized with $\theta_1$ and set $t=1$}; &
    \\
    & \node (debutboucle) {}; &
    \\
    & \node [block] (draw) {\textbf{Draw} a sample of size $N_t$, $x_{t, 1}$, \ldots,
      $x_{t,N_t}$, from $q_t(x)$\\}; &
    \\
    & \node [block] (weight) {Compute the un-normalized \textbf{weights}
      $w_{t,1},\ldots, w_{t,N_t}$ as in \eqref{eq:weights} and
      $\ESS_t = (\sum_i w_{t,i})^2/\sum_iw_{t,i}^2$}; &
    \\
    \node(null3){}; & \node [decision] (ESS) {Is $\text{ESS}_t>\text{ESS}_\text{targeted}$?}; &
    \node [text width=5em] (stop) {\textbf{Stop}, recycle and return the results};
    \\
    & \node [block] (calibrate) {\textbf{Calibrate} $\beta_t$ with \eqref{eq:calibrate_beta}
      and $s_t$ with \eqref{eq:s}}; &
    \\
    & \node [block] (EM) {\textbf{Update} $\theta_{t}$ to set
      $\theta_{t+1}$ and the new proposal by
      performing a few steps of the
      \textbf{EM algorithm} on data $x_{t,1}$, \ldots,
      $x_{t,N_t}$ resampled with weights $\widehat w_{t,i}^{\beta_t}$ as in
      \eqref{eq:hat_w} 
    }; &
    \\
  };

  \begin{scope} [every path/.style=line]
    \path (start)        --    (draw);
    \path (draw)      --    (weight);
    \path (weight)   --   (ESS);
    \path (ESS) -- node [near start] {yes} (stop);
    \path (ESS) -- node [near start] {no} (calibrate);
    \path (calibrate) -- (EM);
    \path (EM)   --++  (-8,0)  |- (debutboucle) node [near end] {$t\to t+1$};
    \path (ESS)   --    node [near start] {yes} (stop);
  \end{scope}
\end{tikzpicture}

  \caption{\label{fig:algorithm} {\bf The tempered, anti-truncated multiple importance sampling
      (TAMIS) algorithm} 
  }
\end{figure}

\subsection{The tempering}
\label{sec:tempering}

Let us assume that, given all past draws, a set
$x_{t,1}, \ldots, x_{t,N_t}$ of size $N_t$ has been drawn independently from a
distribution $q_t(x)=q(x|\theta_t)$ picked among a parametric family $\mathscr Q$ of laws. The
importance weights at this stage are
\begin{equation}
  w_{t,i}=\frac{\pi(x_{t,i})}{q_t(x_{t,i})}.\label{eq:weights}
\end{equation}

We can replace the target $\pi(x)$ by the distribution of density
\begin{equation}
  \label{eq:tempering}
  \pi_{\beta,t}(x) \propto \pi(x)^\beta q_t(x)^{1-\beta}
\end{equation}
with inverse temperature $\beta\in (0,1)$ as proposed by \citet{Neal} in his Annealed importance
sampling.  When $\beta=0$, \eqref{eq:tempering} is the proposal distribution that served to draw
the $x_{t,i}$'s: $\pi_{\beta=0,t}(x)=q_t(x)$. When $\beta=1$, \eqref{eq:tempering}
is the target distribution: $\pi_{\beta=1,t}(x) = \pi(x)$.  Moreover,
$\beta\mapsto \KL(\pi\|\pi_{\beta,t})$ decreases from $\KL(\pi|q_t)$ to $0$, see
Proposition~\ref{pro:KLbeta} in \ref{app:tempering}.
If we use the $x_{t,i}$'s to target $\pi_{\beta, t}(x)$, the unnormalized
importance weights become
\begin{equation}
  \label{eq:new_weight}
  \frac{\pi_{\beta,t}(x_{t,i})}{q_t(x_{t,i})} \propto
  \frac{ \pi(x)^{\beta} q_t(x)^{1-\beta}}{q_t(x_{t,i})}
  =
  \left(\frac{ \pi(x)}{q_t(x_{t,i})}\right)^\beta = w_{t,i}^\beta.
\end{equation}
Such weights have been use in the past, for instance by \citet{Koblents} who relied on the
$x_{t,i}$'s weighted with the $w_{t,i}^\beta$'s to get a sample from $\pi_{\beta,t}(x)$ and to
tune a $q_{t+1}(x)=q(x|\theta_{t+1})$ that approximates $\pi_{\beta,t}(x)$. It is also explored
by \citet{korba} as a regularization strategy.


\subsection{Anti-trunctation and temporary targets}
\label{sec:temporary-targets}

There are many ways to contaminate this weighted sample with draws from $q_t(x)$. The
first idea is to add $N_t'$ new draws $x_{t,N_t+1},x_{t,N_t+2}\ldots$ with all weights
equal to $s$ to the above weighted sample. This idea may add a non negligeable amount of
computational time when the dimension of $x$ is large.
Another idea to contaminate this weighted sample with $q_t(x)$, is to change the
weights. 
We introduce a deterministic contamination based on the value of $w_{t,i}^\beta$. Indeed, the
$x_{t,i}$'s weighted with
\begin{equation}
  \label{eq:hat_w}
  \widehat w_{t,i}^\beta = s \vee w_{t,i}^\beta 
\end{equation}
form an approximation of the distribution with density
\begin{equation}
  \label{eq:mixture}
  \widehat \pi_{\beta,t}(x) \propto s q_t(x) \mathbf 1\{x \in E\} +
  \pi^\beta(x)q_t^{1-\beta}(x) \mathbf 1\{x \not\in E\},
  \quad \text{where }
  E = \{x:\ \pi^\beta(x) / q_t^\beta(x) \le s\}.
\end{equation}
An easy computation gives us the weights of the mixture as follows.
\begin{lem}
  Let $q_t^E(x)$ denotes the normalized probability density of $q_t(x)$ knowing
  $x\in E$ and $\pi_{\beta,t}^{\bar E}(x)$ the normalized probability density of
  $\pi_{\beta,t}(x) = \pi^\beta(x)q_t^{1-\beta}(x)$ knowing $x\not\in
  E$.

  We have
  \[
    \widehat \pi_{\beta,t}(x) = \lambda q_t^E(x) + (1-\lambda) \pi_{\beta,t}^{\bar E}(x)
  \]
  where $\lambda =s \int_E q_t(x)\dd x = 1 - \int_{\bar E} \pi^\beta(x)
  q_t^{1-\beta}(x) \dd x$.
\end{lem}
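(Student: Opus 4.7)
The plan is a direct computation: split the unnormalized density on $E$ and $\bar E$ separately, and recognize each piece as a rescaled conditional density. Concretely, I would start from (\ref{eq:mixture}) and handle the two terms independently. On $E$, the unnormalized density is $s q_t(x)$; by the definition of $q_t^E$ as the restriction of $q_t$ to $E$ renormalized to be a probability density, one has
\[
  s q_t(x)\, \mathbf 1_E(x) = \Bigl(s\int_E q_t(u)\,\dd u\Bigr)\, q_t^E(x).
\]
A symmetric identity, using the definition of $\pi_{\beta,t}^{\bar E}$, yields
\[
  \pi^\beta(x)q_t^{1-\beta}(x)\,\mathbf 1_{\bar E}(x) = \Bigl(\int_{\bar E}\pi^\beta(u)q_t^{1-\beta}(u)\,\dd u\Bigr)\, \pi_{\beta,t}^{\bar E}(x).
\]

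Summing these two expressions reproduces the right-hand side of (\ref{eq:mixture}). Interpreting the $\propto$ symbol as the equality that turns $\widehat \pi_{\beta,t}$ into a bona fide probability density, this gives the mixture representation with weight $\lambda = s\int_E q_t\,\dd u$ on $q_t^E$ and weight $\int_{\bar E}\pi^\beta q_t^{1-\beta}\,\dd u$ on $\pi_{\beta,t}^{\bar E}$, which is the first claim.

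To obtain the second expression for $\lambda$, I would integrate both sides of the mixture identity over the whole space: the left-hand side and both components on the right are probability densities, so the two mixture weights must sum to one, yielding $s\int_E q_t\,\dd u + \int_{\bar E}\pi^\beta q_t^{1-\beta}\,\dd u = 1$ and hence the alternative form $\lambda = 1 - \int_{\bar E}\pi^\beta q_t^{1-\beta}\,\dd u$. There is no genuine obstacle in the argument; the only point to keep track of is the bookkeeping around the proportionality constant in (\ref{eq:mixture}), which is implicitly fixed by the requirement that the two pieces combine to a probability density, and it is precisely this normalization identity that the second equality in the lemma encodes.
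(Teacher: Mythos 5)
Your splitting of \eqref{eq:mixture} over $E$ and $\bar E$ is the right computation and is surely what the paper means by ``an easy computation'' (it gives no explicit proof): writing $s\,q_t(x)\mathbf 1_E(x) = \bigl(s\int_E q_t\bigr)\,q_t^E(x)$ and $\pi^\beta(x) q_t^{1-\beta}(x)\mathbf 1_{\bar E}(x) = \bigl(\int_{\bar E}\pi^\beta q_t^{1-\beta}\bigr)\,\pi_{\beta,t}^{\bar E}(x)$ and summing is exactly how the mixture structure appears.

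The gap is in the normalization bookkeeping at the end. The constant hidden in the $\propto$ of \eqref{eq:mixture} is $Z = s\int_E q_t + \int_{\bar E}\pi^\beta q_t^{1-\beta}$, so after normalizing, the weight on $q_t^E$ is $s\int_E q_t/Z$, not $s\int_E q_t$. Your argument that ``the two mixture weights must sum to one, yielding $s\int_E q_t + \int_{\bar E}\pi^\beta q_t^{1-\beta}=1$'' is circular: what sums to one is the pair of weights already divided by $Z$, and that identity is just the definition of $Z$; it does not force $Z=1$. In fact $Z=1$ fails in general. Take a two-point space with $q_t=(1/2,1/2)$, $\pi=(0.9,0.1)$, $\beta=1$, $s=1/2$, so that $E$ is the second point: then $s\int_E q_t = 1/4$, while $1-\int_{\bar E}\pi^\beta q_t^{1-\beta} = 0.1$, and $Z=1.15$. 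So the two expressions the lemma offers for $\lambda$ are not equal to each other either --- the statement is itself loose on this point, and the correct weight is $\lambda = s\,q_t(E)/Z = 1 - Z^{-1}\int_{\bar E}\pi^\beta q_t^{1-\beta}$. Your write-up inherits this looseness rather than exposing it; a careful proof should carry $Z$ explicitly (this matters downstream, since the proof of Proposition~\ref{pro:KL_hat_pi} uses $\lambda = s\,q_t(E)$ literally to get $\lambda\le q_t(E)$ when $s\le 1$).
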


Note that the scheme is different from the Safe Importance Sampling one \citep{SAIS,safe} as the anti-truncation contaminates the target with the current proposal $q_t$ instead of  $q_0$, and specifically in $E$. We apply in \eqref{eq:hat_w} a non-linear transformation of the weights. Yet it is the
inverse of truncating the importance weights and we refer to these transformed weights as
anti-truncated weights. Unlike the common truncation of the weights that replaces all
weights larger than $S$ by $S$, the anti-truncation we propose in \eqref{eq:hat_w} replaces
all weights smaller than $s$ by $s$.  Actually, we do not need to truncate large values
since we relied on tempering to remove the degeneracy of the
weights. However the sample drawn from $q_t(x)$ with weights $w_{t,i}^\beta$ may not
be of sufficient size to approximate \eqref{eq:tempering} correctly, even if $\beta$ is
well calibrated. If we trust that $q_t(x)$ is a decent sampling distribution, the anti-truncated,
tempered weights fight against the degeneracy of the weights in importance sampling (tempering) and
keep part of the old proposal ($q_t$) to keep exploring the space from it (anti-truncation).
At the end of each stage $t$ (except the final one), the future proposal distribution
$q_{t+1}(x)=q(x|\theta_{t+1})$ is calibrated on the temporary target given by
\eqref{eq:mixture}. The anti-truncated, tempered $\widehat\pi_{\beta,t}(x)$ defined in
\eqref{eq:mixture}, is a continuous bridge from
\begin{itemize}
\item the real target $\pi(x)$ to
\item the freshly used proposal $q_t(x)=q(x|\theta_t)$.
\end{itemize}
The tempered target $\pi_{\beta,t}(x)=\pi^\beta(x)q_t^{1-\beta}(x)$ is
already such a continuous bridge. But, when $\beta$ is fixed, the anti-truncated, tempered
$\widehat \pi_{\beta,t}$ is in-between the tempered $\pi_{\beta,t}$ and the freshly used
proposal $q_t(x)$ in terms of Kullback divergence as given by
Proposition~\ref{pro:KL_hat_pi}.  Let us recall first that, if both $f$ and $g$ are
probability densities, then the Kullback divergence is defined as
\[
  \KL(f||g) = \int f(x) \log \frac{f(x)}{g(x)} \dd x.
\]
If $f$ and $g$ are unnormalized probability densities, we will still denote by $\KL(f||g)$
the Kullback divergence between their normalized versions.

The following proposition is proved in \ref{app:proof}.
\begin{pro}\label{pro:KL_hat_pi}
  When $s\le 1$, we have
  \[
    0 = \KL\big(\pi_{\beta,t} \big\| \pi_{\beta,t}\big) \le
    \KL\big(\pi_{\beta,t} \big\| \widehat\pi_{\beta,t}\big) \le
    \KL\big(\pi_{\beta,t} \big\| q_t\big)
  \]
\end{pro}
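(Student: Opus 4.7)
The first two relations are trivial: $\KL(\pi_{\beta,t}\|\pi_{\beta,t})=0$ by definition, and the Kullback divergence is non-negative. The real content is the upper bound $\KL(\pi_{\beta,t}\|\widehat\pi_{\beta,t}) \le \KL(\pi_{\beta,t}\|q_t)$. Abbreviate $p=\pi_{\beta,t}$, $\widehat p=\widehat\pi_{\beta,t}$, $q=q_t$ (all normalized). My strategy is to exploit the mixture representation of $\widehat p$ furnished by the preceding lemma together with the chain rule for KL divergence across the partition $\{E,\bar E\}$.

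Put $\mu = p(E)$, $a = q(E)$, and $\lambda = \widehat p(E)$, so that $p = \mu\,p^E + (1-\mu)\,p^{\bar E}$, $q = a\,q^E + (1-a)\,q^{\bar E}$, and $\widehat p = \lambda\,q^E + (1-\lambda)\,p^{\bar E}$. The key structural observation, visible from \eqref{eq:mixture}, is that on $\bar E$ the conditional distributions of $p$ and $\widehat p$ coincide (both being proportional to $\pi^\beta q_t^{1-\beta}$ there), while on $E$ the conditional of $\widehat p$ equals $q^E$. The chain rule for KL then yields
\[
\KL(p\|q) - \KL(p\|\widehat p) \;=\; \Big\{\KL\!\big(\mathrm{Bern}(\mu)\big\|\mathrm{Bern}(a)\big) - \KL\!\big(\mathrm{Bern}(\mu)\big\|\mathrm{Bern}(\lambda)\big)\Big\} + (1-\mu)\,\KL\!\big(p^{\bar E}\,\big\|\,q^{\bar E}\big),
\]
the two copies of $\mu\,\KL(p^E\|q^E)$ cancelling exactly.

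The second summand is non-negative. For the first, I would establish the sandwich $\mu \le \lambda \le a$; since $x\mapsto\KL(\mathrm{Bern}(\mu)\|\mathrm{Bern}(x))$ is convex with minimum $0$ at $x=\mu$, it is non-decreasing on $[\mu,1]$, and hence the bracketed difference is non-negative. Both halves of the sandwich are elementary: the inequality $\lambda \le a$ reduces to $\widehat Z \ge s$ (where $\widehat Z$ denotes the normalizing constant of $\widehat\pi_{\beta,t}$), which follows from $\widehat w^\beta \ge s$ pointwise; and the inequality $\mu \le \lambda$ reduces, after clearing denominators, to $\int_E w^\beta\,\dd q \le s\,q(E)$, which holds because $w^\beta\le s$ on $E$ by the very definition of $E$. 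In this route the hypothesis $s\le 1$ plays no essential analytic role (it merely guarantees a non-trivial anti-truncation regime), so the main difficulty is almost purely notational: tracking the two different mixture weights $\mu$ and $\lambda$, which the chain-rule decomposition absorbs cleanly.
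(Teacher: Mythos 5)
Your proof is correct and follows essentially the same route as the paper: the chain-rule decomposition of $\KL$ over the partition $\{E,\bar E\}$ (the paper's Lemma~\ref{lem:Ebar}), cancellation of the common term $\pi_{\beta,t}(E)\,\KL(\pi_{\beta,t}^E\|q_t^E)$, non-negativity of the $\bar E$-conditional term, and comparison of the two Bernoulli-type marginal terms. One point in your favour: at that last step the paper only records $\lambda\le q_t(E)$, whereas concluding $\KL(\mathrm{Bern}(\mu)\|\mathrm{Bern}(\lambda))\le\KL(\mathrm{Bern}(\mu)\|\mathrm{Bern}(q_t(E)))$ genuinely requires the full sandwich $\pi_{\beta,t}(E)\le\lambda\le q_t(E)$ together with the monotonicity of $x\mapsto\KL(\mathrm{Bern}(\mu)\|\mathrm{Bern}(x))$ on $[\mu,1]$; you supply both halves (via $w^\beta\le s$ on $E$ and $\widehat Z\ge s$), which closes a small gap in the paper's own argument and also makes explicit that the normalizing constant of $\widehat\pi_{\beta,t}$ must be tracked.
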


\subsection{Updating the proposal}
\label{sec:updating}

The family of proposals we recommend for TAMIS is composed of Gaussian
mixture models, with diagonal covariance matrix for each
component. The density of a distribution $q(x|\theta)\in \calQ$ is defined as
\[
  q(x|\theta) = \sum_{k=1}^K \mathfrak{p}_k\, \varphi(x|\mu_k, \Sigma_k)
\]
where $\varphi(x|\mu, \Sigma)$ is the multivariate Gaussian density
with mean $\mu$ and covariance matrix $\Sigma$. This family is
parametrized by
$\theta=(\mathfrak{p}_1,\ldots, \mathfrak{p}_K, \mu_1, \ldots, \mu_K,
\Sigma_1, \ldots, \Sigma_K)$.

The future proposal distribution $q_{t+1}(x)\in \calQ$ is set by using the EM
algorithm. Let us assume that $q_t(x) = q(x|\theta_t)\in \calQ$ is the
Gaussian mixture with
parameter $\theta_t$. We tune $q_{t+1}(x)=q(x|\theta_{t+1}) \in \calQ$, that is
to say, we pick $\theta_{t+1}$ with the help of the $x_{t,i}$'s weighted with $\widehat
w_{t,i}^\beta$ as given in \eqref{eq:hat_w}. After resampling this
sample occording to their weights $\widehat w_{t,i}^\beta$,
we resort to iterations of the EM algorithm, starting from $\theta_t$, to
get $\theta_{t+1}$. Because of well known properties of the EM
algorithm \citep[see, e.g., ][]{fruhwirth2019handbook}, we have that
\[
  \KL\big(\widehat \pi_{\beta,t} \big\| q_{t+1} \big) <
  \KL\big(\widehat \pi_{\beta,t} \big\| q_{t} \big).
\]

\section{Practical aspects of the TAMIS algorithm}
\label{sec:practical}

We can now discuss pratical aspects of the proposed algorithm, based
on numerical results that demonstrate the typical behavior of the
method. 

\subsection{Choosing the inverse temperature $\beta$ and the anti-truncation $s$}
\label{sec:auto-calibration}
The inverse temperature $\beta$ has to be chosen at each stage of the algorithm (except
the last one). We follow the path open by by \citet{Beskos} to chose $\beta$. To ensure
that the $x_{t,i}$'s weighted with $\widehat w_{t,i}^\beta$ is a sample that can
approximate $\widehat\pi_{\beta,t}$, we set $\beta$ automatically at each stage with
\begin{equation}\label{eq:calibrate_beta}
  \beta_t = \sup\big\{\beta\in(0,1):\ \ESS(\beta)>\ESS_\text{min}\big\}, \quad \text{where }
  \ESS(\beta) = {\ds \left(\sum_{i=1}^{N_t} w_{t,i}^\beta\right)^2}\bigg/{\ds
    \sum_{i=1}^{N_t} w_{t,i}^{2\beta}}.
\end{equation}
The function $\beta \mapsto \text{ESS}(\beta)$ is continuously decreasing (see
Proposition~\ref{decreasing_ESS} of the Appendix). Hence the optimization problem stated in
\eqref{eq:calibrate_beta} can be solved easily by a simple one-dimensional bisection method and do not require a new sampling step, contrary to \citet{korba}'s adaptive regularization scheme.
Note that the weights $\widehat w^\beta$ related to the temporary target \eqref{eq:mixture} are
used only to calibrate the next proposal $q_{t+1}(x)$ --- this is an important difference
with the algorithm proposed by \citet{Koblents}. Hence the value of $\ESS_\text{min}$ should be
fixed such that the fit of $q_{t+1}(x)$ with the EM algorithm provides stable estimates
with an iid sample of size $\ESS_\text{min}$.

A good choice of $\ESS_\text{min}$ is essential to get numerical stability in our
algorithm. If $\ESS_\text{min}$ is much larger than really needed, the
algorithm will remain stable numerically. But convergence to the target will be slow down:
as the tempering will be more aggressive at each stage, more iterations will be needed to
move from the first proposal $q_1(x)$ to the target $\pi(x)$. The typical effect
of changing the value of $\ESS_\text{min}$ is studied in Figure~\ref{fig:Delta}. For
example, if $\calQ$ is the set of mixtures of $K$ Gaussian densities with diagonal
covariances, the update of the proposal with EM steps require to calibrate $Kd$ mean
paramaters and $Kd$ variance parameters. Thus, we should have $2Kd \ll \ESS_\text{min} \le
N_t$.

The value of $s$ that set the amount of anti-truncation is more easy to tune. We choose $s$
to be the quantile of order $\tau$ of the tempered weights:
\begin{equation}
  \label{eq:s}
  s_t = \operatorname{quantile}_{\text{order}=\tau}\Big(w_{t,1}^\beta,\ldots, w_{t,N_t}^\beta\Big).
\end{equation}
Although the required number of iterations may be suboptimal, the value $\tau=0.4$ appears to be a
universal compromise, working flawlessly in every numerical example considered in this paper. Lower
values of $\tau$ picked in $(0,0.1)$ can speed up the algorithm in low dimensional problems, but can
induce instability. Hence, we strongly advocate for the almost universal $\tau = 0.4$, see
Figure~\ref{fig:tau}.

\subsection{Numerical diagnostics}
\label{sec:diagnostics}

In order to assess the convergence of the algorithm we monitor the inverse temperature and the
estimated Kullback-Leibler divergence along iterations. Following \cite{Capp__2008}, we estimate the
Kullback-Leibler divergence between the target density and the mixture proposal using the Shannon
entropy of the normalised IS weights. Indeed since the normalised perplexity $\exp(H^{t,N})/N$ is a
consistent estimator of $\exp(-KL(\pi ||q_t))$,where $H^{t,N} = -\sum_{i=1}^{N_t} \omega_{i,t} \log
\omega_{i,t}$ \citep{Capp__2008}, we simply estimate $KL(\pi ||q_t) \approx \sum_{i=1}^{N_t}
\omega_{i,t} \log \omega_{i,t} +\log N_t$. Note that this estimate is upper bounded by $\log N_t$,
leading to an obvious bias when $KL(\pi ||q_t)$ is large or $N_t$ small. However this bias
does not practically prevent the use of this estimate as a monitoring tool.

We show in Figure~\ref{fig:monitoring} the typical evolution of both the inverse temperature $\beta$
and the estimated KL divergence along iteration. The inverse temperature starts increasing slowly
during the first iterations, followed by a strong acceleration until it stabilises. The estimated KL
divergence on the other hand starts with a plateau at its upper bound ($\log N_t$), then drops to a
much small value as $\beta$ reaches it maximum.

In some cases, $\beta$ does not reach 1, nor does the estimated KL divergence reach 0. Indeed if
the target density can't be well approximated by any proposal in $\calQ$,
$KL(\pi ||q_t)$ never reaches 0. This behaviour is also observed on targets of very high dimension
regardless of the proposal distribution family (see Section~\ref{sec:dim}). 
Even in those pathological cases, the convergence of TAMIS can be simply assessed by the sharp
increase of $\beta$ followed by its stabilization (or the sharp decrease of the estimated KL).

\subsection{Stopping criterion and recycling}
\label{sec:stop}

When the iterative algorithm is stopped at time $T$, we end with a set of weighted simulations:
\[
  x_{t,i}\sim q_t(\cdot)=q(\cdot|\theta_t), \quad \text{with weight } w_{t,i}=\frac{\pi(x_{t,i})}{q_t(x_{t,i})}.
\]
As in many iterative importance sampling algorithms such as AMIS~\cite{AMIS}, we recycle all these
draws and change their weights to
\[
  w_{t,i} = \frac{\pi(x_{t,i})}{Q(x_{t,i})}, \quad \text{where } Q(x) = \frac{1}{N_1+\cdots +
    N_T}\sum_{t=1}^T N_t q_t(x).
\]

We use the usual effective sample size estimate to assess the quality of the IS sample given by
TAMIS. Thus we suggest stopping the algorithm when the predifined ESS or the maximal
number of iterations is reached.
As usual in such adaptive algorithms, we recycle all particles with their weights after stopping the
iterations.  This recycling improve the efficiency of the algorithm. Thus, the ESS of the
final sample returned by the algorithm is underestimated by the sum of the effective sample sizes at
each iteration. Hence, to monitor that we have reached the predefined level, we stop at the first
time where
\[
  \ESS_1+\cdots+\ESS_t > \ESS_\text{predefined}
\]
or when we reach the maximal number of iterations.

\begin{table}[tb]
  \centering
  \caption{Parameter tuning and monitoring experiments}
  \label{tab:exp3}
  \renewcommand{\arraystretch}{1.2}
  \begin{tabular*}{.95\textwidth}{@{\extracolsep{\fill}} l l l l}\toprule
    \textbf{Experiment} & E3.1 & E3.2 & E3.3
    \\ \hline
    \textbf{Dimension} & $d=50$ & $d=50$ & $d=1,000$
    \\
    \textbf{Target} & $\mathcal N(50, 5)^{\otimes d}$ &$\mathcal N(50, 5)^{\otimes d}$ &$\mathcal N(10, 5)^{\otimes d}$  
    \\
    \textbf{Proposals} & \multicolumn{2}{c}{Gaussian mixture with 5 components} & Gaussian
   \\
    \textbf{Draws} & $N_t=2,000$ & $N_t=2,000$ & $N_t=2,000$
    \\
    $\textbf{ESS}_\textbf{min}$ & $\in\{100,200, 1400\}$ & $300$ & $1,000$
    \\
    $\boldsymbol\tau$ & $0$ & $\in\{0, 0.1, \ldots, 0.9, 0.95\}$ & $0.4$
    \\
    \textbf{Stop} & \multicolumn{2}{c}{$\sum_t \ESS_t > 10,000$} & $t=500$
    \\ \bottomrule
  \end{tabular*}
\end{table}

\subsection{Parameter tuning and monitoring}

We start by illustrating the effect of parameter tuning on TAMIS with the experiments targeting
various multivariate Gaussian distribution as given in Table~\ref{tab:exp3}. The proposal at first
iteration was a Gaussian mixture with 5 components: each component is centered around a $\mu_k$
drawn at random from $\mathcal{U}([-4,4])^{\otimes d}$ and has covariance matrix $\Sigma_k=200 \times
\mathbb{I}_{50}$ with large eigenvalues. To approximate de MSE, we ran $20$ replicates of the
experiences for each set of parameters.

Figure~\ref{fig:Delta} shows Experiment E3.1 described in Figure~\ref{tab:exp3} and Figure~\ref{fig:tau} shows 
Experiment E3.2. The conclusion is that we should set $\ESS_\text{min}$ so that the
calibration of the new proposal (i.e., of $\theta_{t+1}$) is stable and that $\tau=0.4$ is a decent value.

To illustrate monitoring in Figure~\ref{fig:monitoring}, we first plot a typical tempering path
(obtained on Experiment E3.1 with $\ESS_\text{min} = 100$ and $\tau = 0$) along with the estimated
KL divergence. As mentioned in section \ref{sec:auto-calibration}, the auto calibrated tempering path
has a rather sigmoid-like shape with a clear transition and stabilization to $\beta = 1$, while
the KL-divergence decreases (despite the estimator bias at the beginning) until both quantities
stabilizes together around $1$ and $0$ respectively. 

Finally we illustrate the typical behavior of the monitoring on targets of very high dimension with
Experiment E3.3. The first proposal distribution to initialize TAMIS is a Gaussian distribution
centered at $\mu$ drawn from $\mathcal{U}([-4,4])^{\otimes d}$ and with covariance matrix
$\Sigma = 100 \times \mathbb{I}_{d}$. Figure~\ref{fig:high-dim} shows that TAMIS provide more than
decent results in high dimension.
\begin{figure}
\centering
    \includegraphics{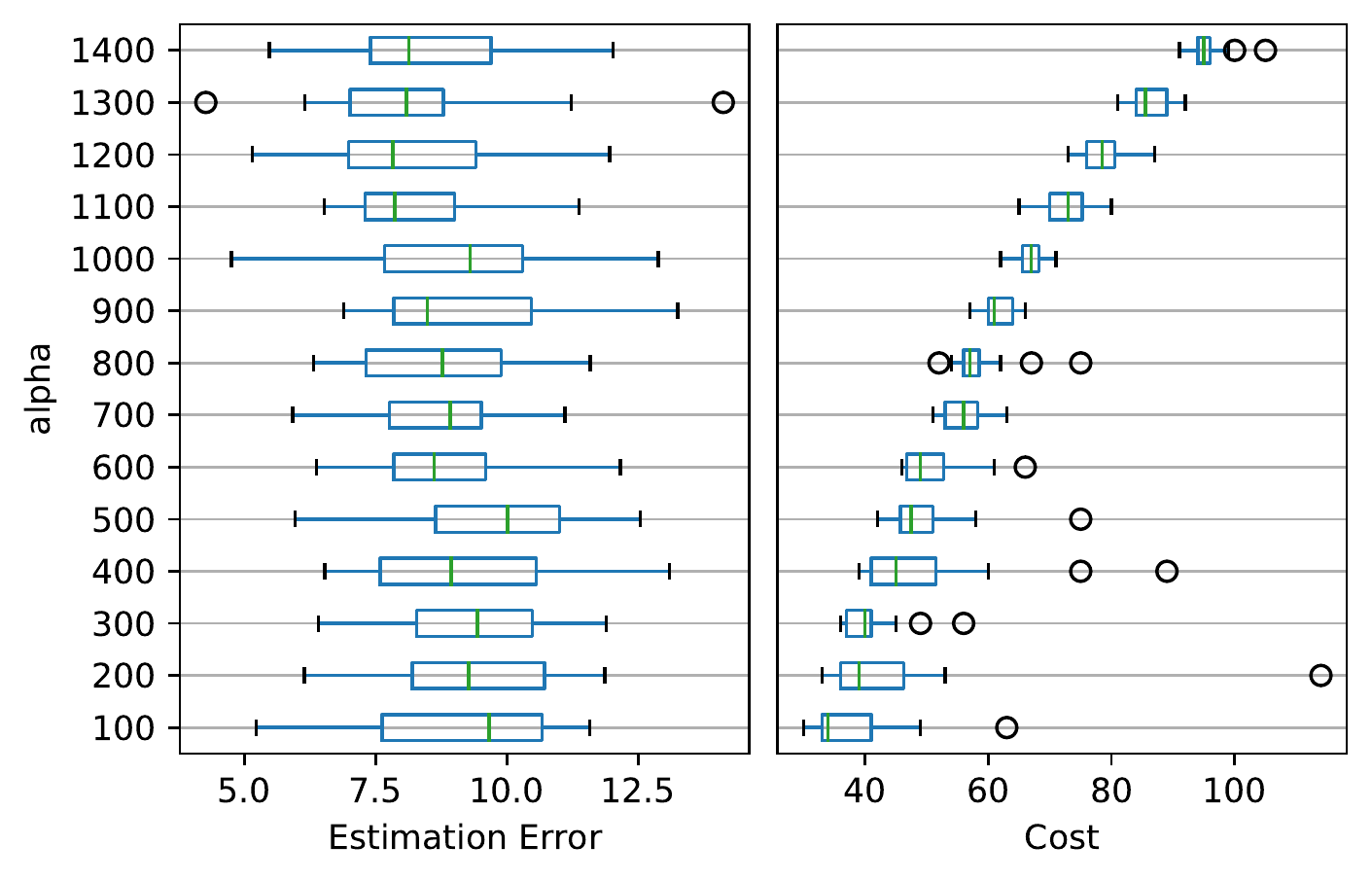}

    \caption{Effect of varying the $\ESS_\text{min}$ parameter ($y$-axis) defining the minimum ESS to be
      reached for calibration of the interse temperature. As stopping depends on the total estimated
      ESS, the MSE of the variance ($x$-axis on the left) estimation doesn't depend on $\ESS_\text{min}$, but the
      number of required iterations ($x$-axis on the right) before convergence of the sequence of proposal distributions
      increases. Increasing $\ESS_\text{min}$ further than the minimum required to stabilize
      the calibration of the new proposal (i.e., of $\theta_{t+1})$ with the EM step results in an increased computational cost.
    }
    \label{fig:Delta}
\end{figure}

\begin{figure}
\centering
   \includegraphics{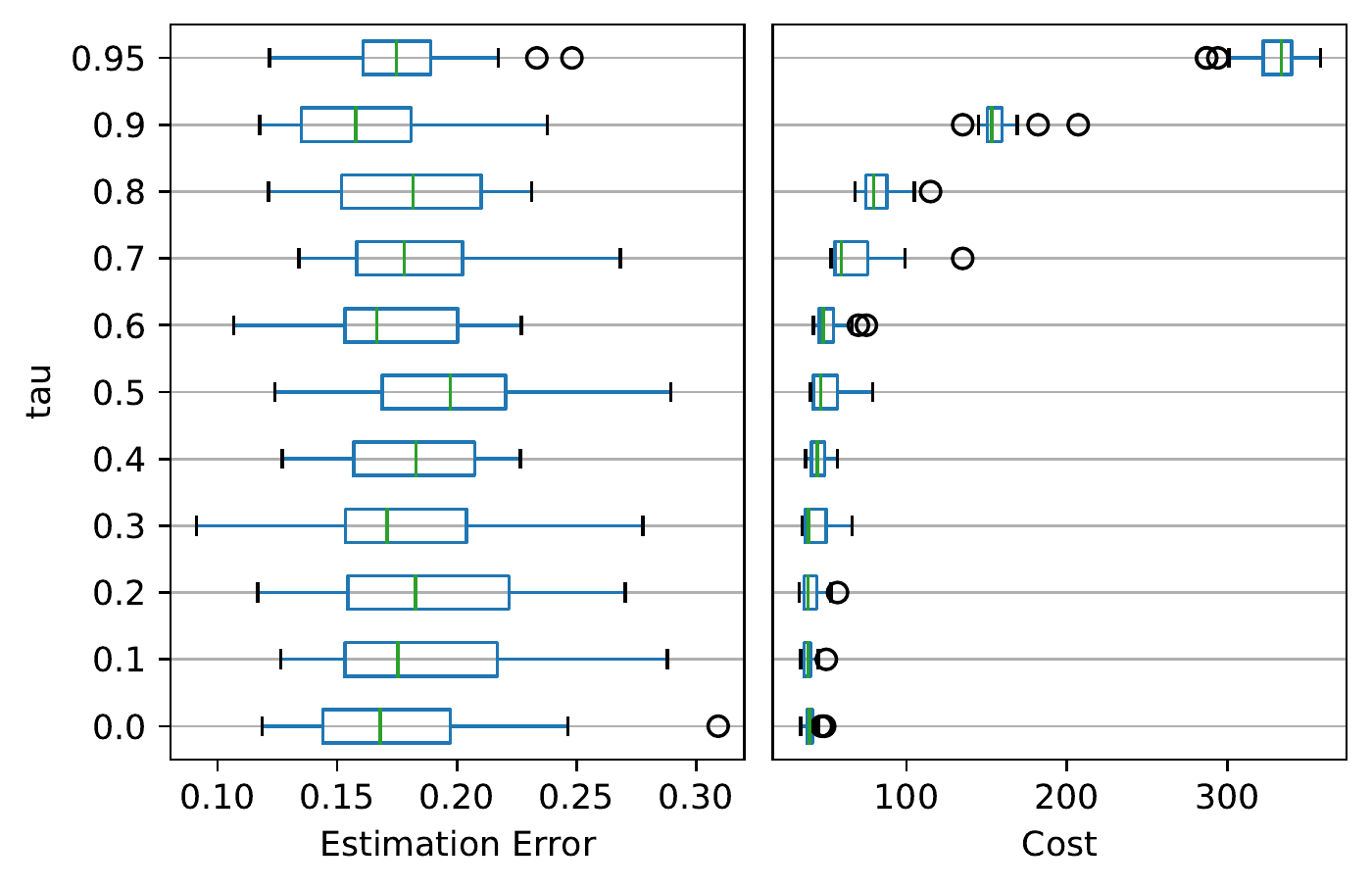}
   \caption{Effect of varying the $\tau$ parameter ($y$-axis) defining the antitruncation
     threshold. Except for very high values, the truncation has no detrimental effect on either the
     MSE ($x$-axis on the left) of the estimated variance or the required number of iterations
     ($x$-axis on the right) before convergence of the sequence of proposal distributions
     increases. As for $\ESS_\text{min}$, once the calibration of of the new proposal (i.e., of
     $\theta_{t+1})$ with the EM step is stable, increasing $\tau$ further only increases the
     computational cost.  }
    \label{fig:tau}
  \end{figure}

\begin{figure}
    \centering
    \includegraphics{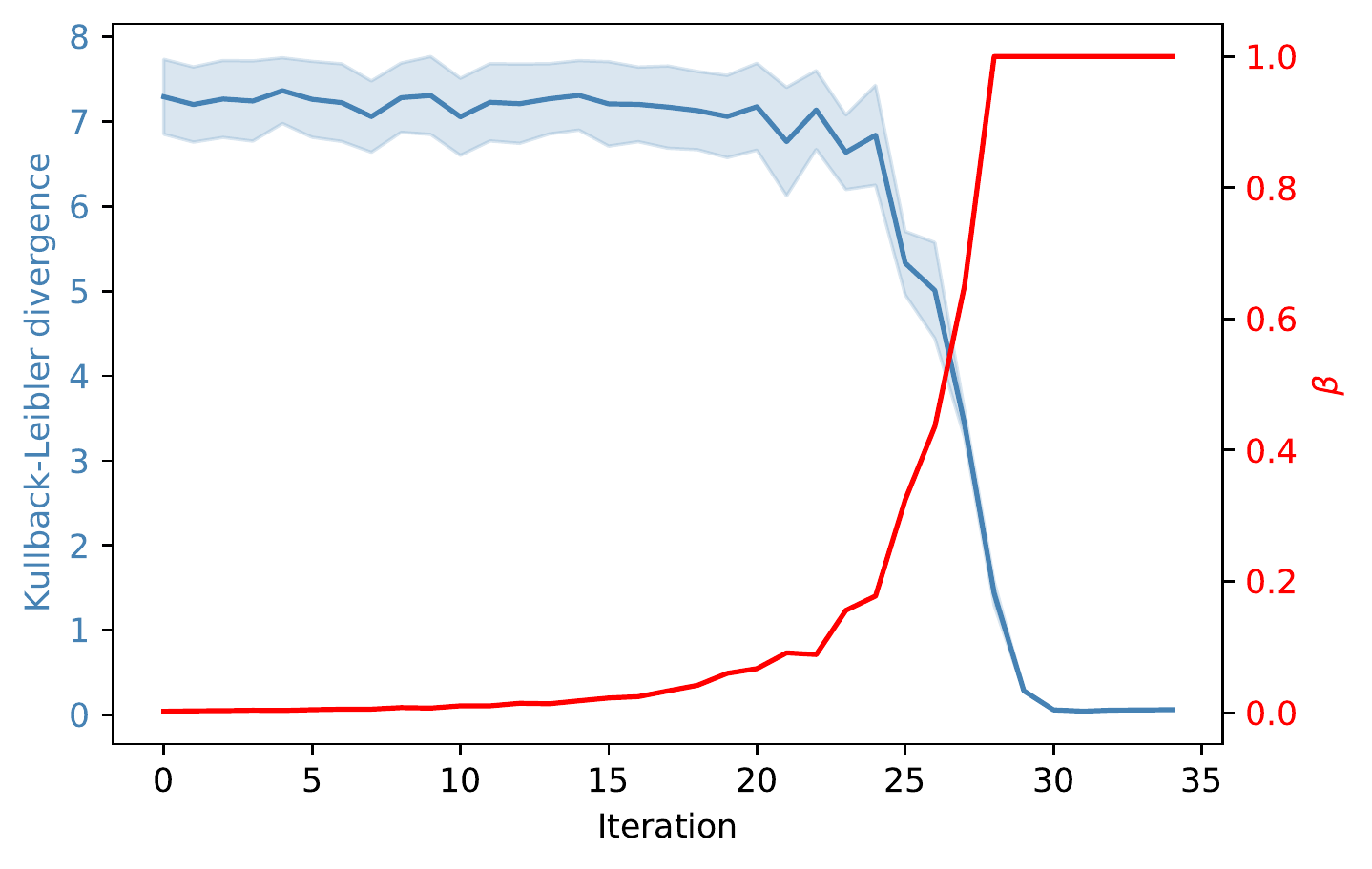}
    \caption{Typical evolution of the inverse temperature $\beta$ ($y$-axis in red) and estimated
      Kullback-Leibler divergence ($y$-axis in blue) along iterations ($x$-axis). The automatically
      calibrated $\beta$ starts by increasing slowly until a sharp acceleration, followed by
      stabilization clearly indicating convergence of sequence of proposal distributions. The
      estimated KL divergence shows the upper bound biais until iteration $20$, as detailed in
      \ref{sec:diagnostics}. Yet its sharp decrease and stabilization mirrors $\beta$'s path.}
    \label{fig:monitoring}
  \end{figure}

\begin{figure}
   \centering
    \includegraphics{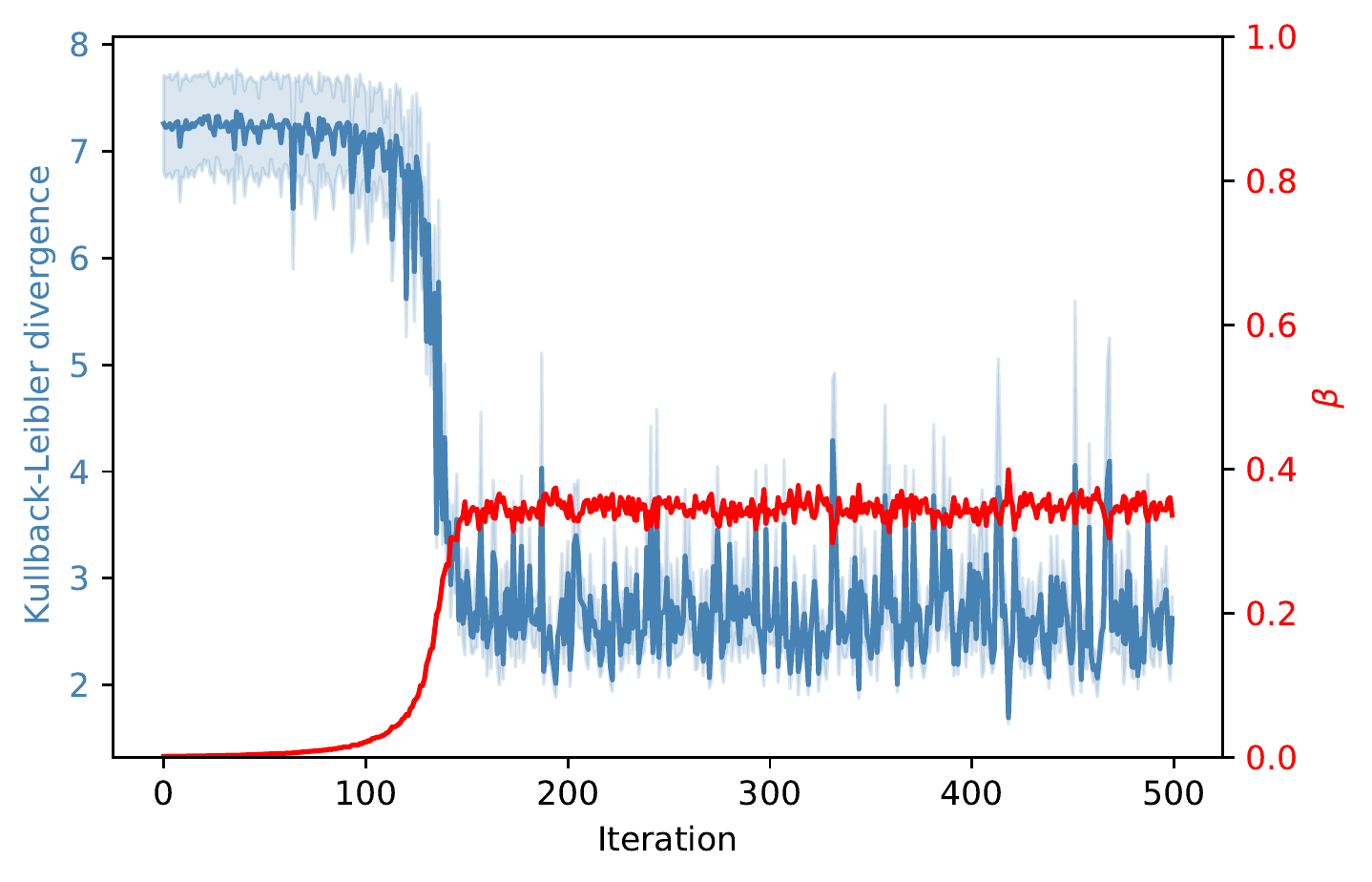}
    \caption{A very high-dimensional problem : The target is a $1000$-dimensional gaussian distribution, the proposals are gaussian distributions with diagonal covariance. (left) Evolution of  the inverse temperature $\beta$ (in red) and estimated Kullback-Leibler divergence (blue) along iterations. \textit{(right)} the L2 distance between the moments of the target and proposal distribution at each iteration. The temperature doesn't go to $1$ despite the target distribution belonging to the family of proposal distributions and the covariance of the proposal doesn't converge to the real covariance. }
    \label{fig:high-dim}
\end{figure}

\section{Numerical Experiments}

We finally illustrate the good numerical properties of TAMIS relatively to its initialization and to
the dimensionality of the problem.

\begin{table}[tb]
  \centering
  \caption{Initialization and dimensionality}
  \label{tab:4}
  \renewcommand{\arraystretch}{1.2}
  \begin{tabular*}{.95\textwidth}{@{\extracolsep{\fill}} l l l l}\toprule
    \textbf{Experiment} & E4.1 & E4.2 & E4.3
    \\ \hline
    \textbf{Dimension} & $d\in\{20,50\}$ & $d\in\{5, 10, 20, 50, 100\}$ &  $d\in\{300, 500\}$
    \\
    \textbf{Target} & Rosenbrock distr. &\multicolumn{2}{c}{$\mathcal N(50, 5)^{\otimes d}$}  
    \\
    \textbf{Proposal} & \multicolumn{3}{c}{Gaussian mixture with 5 components} 
    \\
    \textbf{Draws} & $N_t=2,000$ & $N_t=1,000$ & $N_t=2,000$
    \\
    $\textbf{ESS}_\textbf{min}$ & $100$ & $300$ & $1,000$
    \\
    $\boldsymbol\tau$ & \multicolumn{3}{c}{$0.4$}
    \\
    \textbf{Stop} & $t=20$ &\multicolumn{2}{c}{$\sum_t \ESS_t > 1,000$} 
    \\ \bottomrule
  \end{tabular*}
\end{table}

\subsection{On the effect of initialization}
\label{sec:init}
We now compare the effect of a bad initialization on TAMIS, AMIS and N-PMC with Experiment E4.1
given in Table~\ref{tab:4}. The example considered is the banana shape target density of Haario et
al., also know as the Rosenbrock distribution. Let
$ \sigma^2 = 100, \Sigma = \text{diag}(\sigma^2,1, \dots , 1), b= 0.03$ and
$\Psi(x) =\left(x_1,x_2+b(x_1^2-\sigma^2), x_3, \dots , x_d\right)$. The target is the Rosebrock
distribution with density
\begin{equation*}
    \pi(x) = \varphi( \Psi(x) | 0, \Sigma).
\end{equation*}
For N-PMC, the inverse temperature sequence is chosen as in \cite{Koblents}, i.e., $\beta_t = 1/
(1+e^{-(t-\ell)})$ where $\ell$ is a tuning parameter we have set to $5$.

The first proposal at initialization is a Gaussian mixture model with 5 components with covariance
matrix all equal to $\Sigma$, and centered at random $\mu_k$ drawn from
$\mathcal{N}(0, \Sigma_{0,k}/5)$. We used various covariance matrices $\Sigma$, starting from the
diagonal matrix $\mathrm{diag}(200,50,4,\dots, 4)$ used in \cite{PhysRevD.80.023507} and
\cite{Koblents}. This initial covariance matrix is already adapted to the target and can be
considered as an a priori informed proposal. Then, we used less informed covariance matrices for
$\Sigma$:
\begin{itemize}
\item $\mathrm{diag}(200,50,10,\dots, 10)$, 
\item $\mathrm{diag}(200,50,20,\dots, 20)$,
  $\mathrm{diag}(200,50,50,\dots, 50)$, 
\item $\mathrm{diag}(200,100,100,\dots, 100)$ and 
\item finally  $200 \times \mathrm{I}_d$ which is blind regarding the shape of the target.
\end{itemize}
Each experiment was repeated $500$ times.

Figure~\ref{fig:NPMC} shows the final ESS.  As expected the final ESS
after a fixed number of iterations decreases as the initialization gets worse. Since the dimension
is already high, AMIS fails very frequently even with the first initialization. The tempering scheme
of N-PMC is effective only with a well calibrated initialization, while TAMIS remains effective and
allows the algorithm to converge in every case without any additional parameter tuning.

    \begin{figure}
\centering
\begin{minipage}{.5\textwidth}
   \centering
    \includegraphics[scale=0.5]{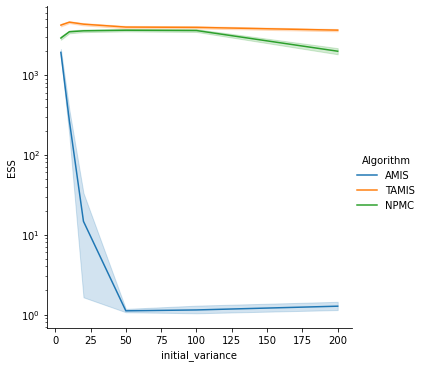}
\end{minipage}%
\begin{minipage}{.5\textwidth}
    \centering
    \includegraphics[scale=0.5]{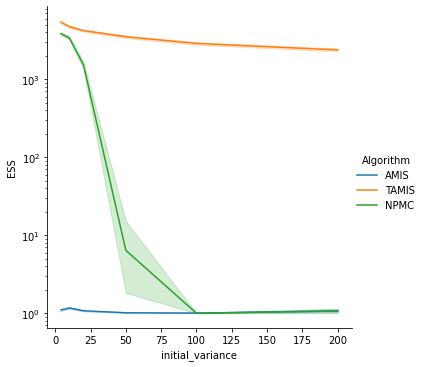}
\end{minipage}

  \caption{Effective Sample Size ($y$-axis) of AMIS, N-PMC and TAMIS after $40,000$ draws along 20
    iterations, with increasingly wide covariance matrix at initialization ($x$-axis) in dimension
    20 (left) and 50 (right). As expected from the litterature, AMIS is only performing well with a
    good initialization and if the dimension is relatively low. N-PMC is able to correct for bad
    initialization with a well chosen tempering path if the dimension is low enough, while TAMIS
    performs well in every case.}
  \label{fig:NPMC}
\end{figure}
\subsection{On the effect of dimensionality}
\label{sec:dim}
We now consider a simple Gaussian target
\begin{equation*}
  \mathcal{N}(50, 5)^{\otimes d}
\end{equation*}
of Experiment E4.3 of Table~\ref{tab:4} in high dimension. We only consider TAMIS only, as both AMIS
and N-PMC fail in every case.  The initialization of the proposal distribution is poor for both
location and for scale. The proposal distributions are Gaussian mixture models with $5$
components. At initialization, they are centered at random
$\mu_{k} \sim \text{Unif}([-4,4])^{\otimes d}$ and have covariance matrices
$\Sigma_{k} = 200\times \mathrm{I}_d$.  The target is therefore very concentrated and centered very
far in the tail of the initial proposal.  The other tuning details are given in Table~\ref{tab:4}.

We plot the MSE when estimating the trace of the covariance matrix along iterations. We also plot the 
number of likelihood evaluations required before convergence of the proposal (assessed by the number
of iterations such that $\hat{\mathrm{KL}}(\pi || q_t)>1$ in Figure~\ref{fig:Dimension}.

The number of simulations required before convergence increases as expected with the dimension. But
we note that not only is TAMIS able to accurately estimate scale and location of a very high
dimensional target, it does so with the same bad initialization as previously, with very little
tuning required.

\begin{figure}
\centering
\begin{minipage}{.5\textwidth}
   \centering
    \includegraphics[scale=0.5]{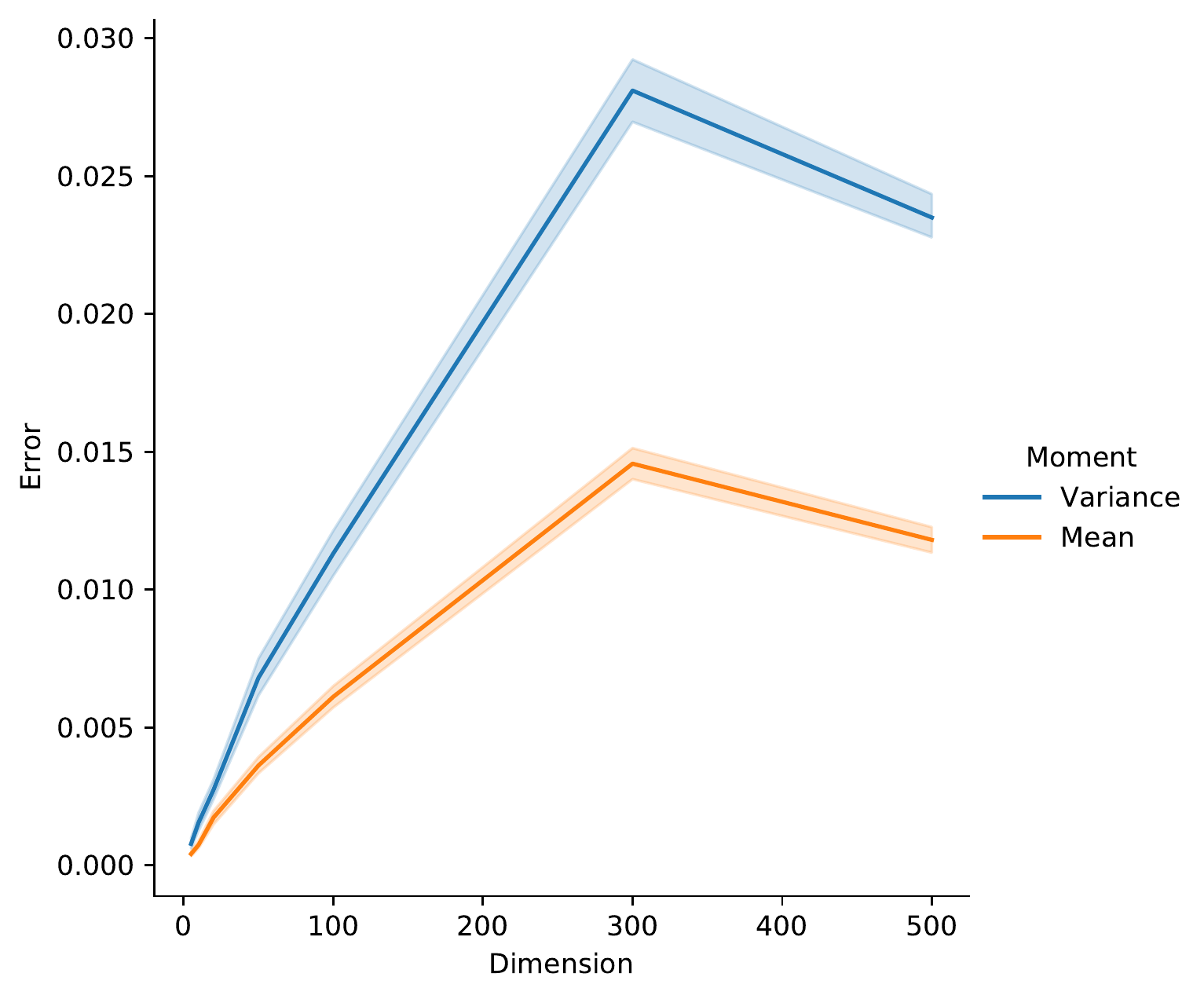}
\end{minipage}%
\begin{minipage}{.5\textwidth}
    \centering
    \includegraphics[scale=0.5]{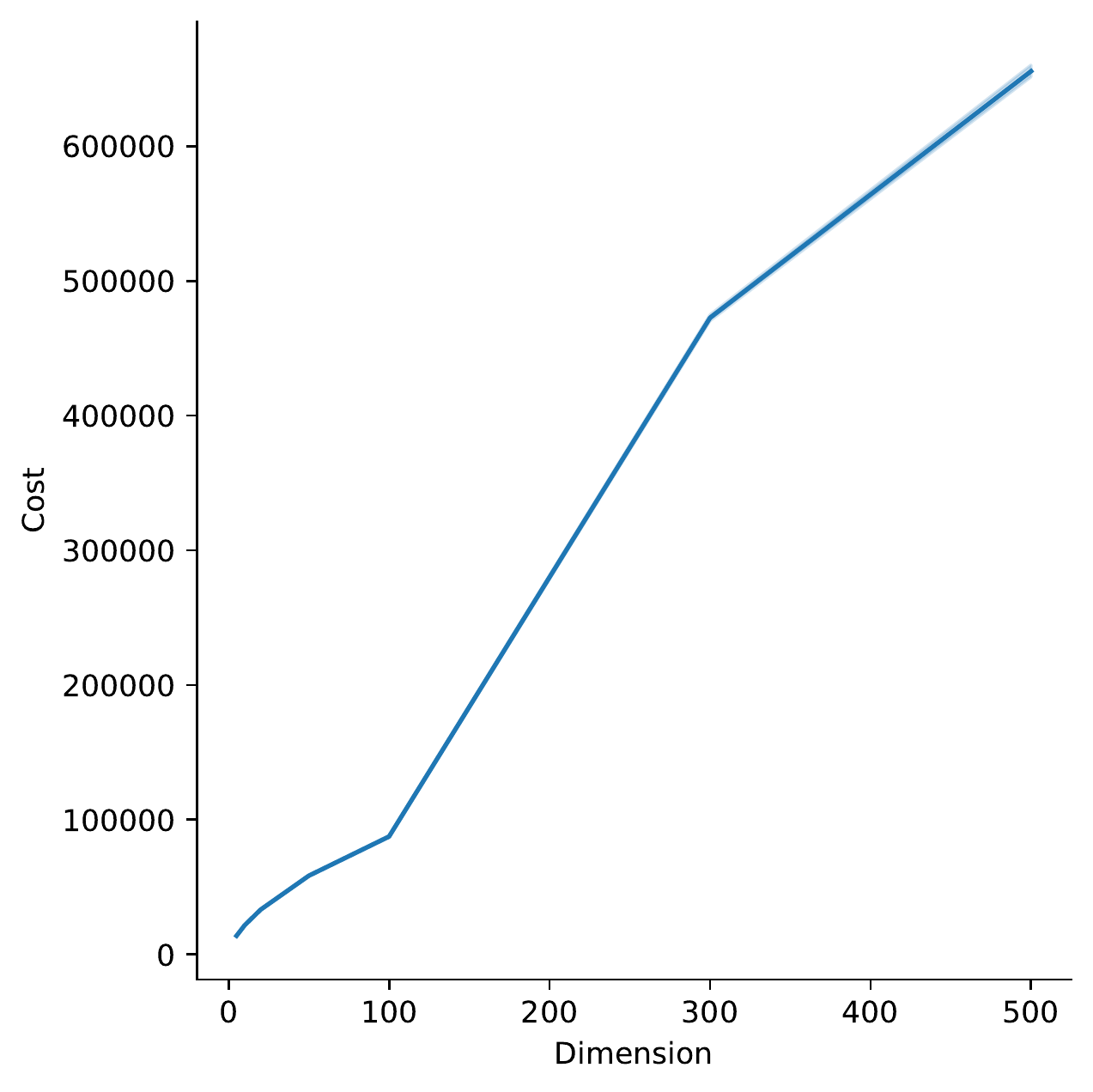}
\end{minipage}

\caption{Mean square error ($y$-axis on the left) of the estimates of the mean and covariance for
  increasing dimension ($x$-axis) and the required number of iterations ($y$-axis on the right)
  before convergence of the proposal to the target distribution (right).}
    \label{fig:Dimension}
\end{figure}

\section{Conclusion}

We have designed an adaptive importance sampling that is 
\begin{itemize}
\item robust to poor initialization of proposal and
\item robust to high dimension of the space to sample
\item efficient in the number of evaluations of the target density and
\item does not rely on any gradient computation.
\end{itemize}
Very few importance sampling algorithm are stable in dimension higher than $100$, and TAMIS is one
of them. Therefore, TAMIS can be used to initialize other Monte
Carlo algorithm such as MCMC methods that can lead to more precise estimates when correctly initialized.
The phase transition observed in the decrease of the Kullbuck-Leibler divergence we monitor remains
to be explained theoretically. 

\bibliographystyle{apalike} \bibliography{TAMIS_draft}


\appendix

\section{Results on the tempered targets}
\label{app:tempering}

Here, we consider that $\pi(x)$ and $q_t(x)$ are normalized densities.

For all $\beta\in[0;1]$, we introduce the normalized density
\begin{align*}
  \pi_{\beta,t}(x) &=
                          \frac{1}{C_t(\beta)}\pi^{\beta}(x)q_t^{1-\beta}(x)
                          \quad \text{where }
  C_t(\beta)=\int \pi^{\beta}(x)q_t^{1-\beta}(x) \dd x.
\end{align*}
Since the logarithm is a concave function, we have for all $\beta$ and
$x$,
\[
  \pi^{\beta}(x)q_t^{1-\beta}(x) \le \beta \pi(x) +
  (1-\beta) q_t(x).
\]
Thus, for all $\beta$, $C_t(\beta)\le 1$. Moreover, $C_t(0)=C_t(1)=1$.

\begin{pro} \label{pro:KLbeta}
  The function $\beta\mapsto \KL(\pi\|\pi_{\beta,t})$ is a convex, non
  increasing function. It decreases from $\KL(\pi|q_t)$ to $0$.
\end{pro}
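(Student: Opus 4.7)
The plan is to reduce the question to studying the log-normalizer $\log C_t(\beta)$ via a direct algebraic simplification, then exploit its convexity.

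First, I would rewrite the integrand of the KL divergence explicitly. Using the definition of $\pi_{\beta,t}(x)$, the log-ratio simplifies to
\[
  \log\frac{\pi(x)}{\pi_{\beta,t}(x)} = (1-\beta)\log\frac{\pi(x)}{q_t(x)} + \log C_t(\beta),
\]
so integration against $\pi$ yields the clean decomposition
\[
  f(\beta) := \KL\bigl(\pi\,\|\,\pi_{\beta,t}\bigr) = (1-\beta)\,\KL(\pi\,\|\,q_t) + \log C_t(\beta).
\]
The boundary values then follow immediately from $C_t(0)=C_t(1)=1$, which the excerpt already records: at $\beta=0$ we get $f(0)=\KL(\pi\|q_t)$ and at $\beta=1$ we get $f(1)=0$.

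Next, since the first summand is affine in $\beta$, convexity of $f$ reduces to convexity of $\beta\mapsto \log C_t(\beta)$. For this I would differentiate under the integral sign. Setting $u(x)=\log(\pi(x)/q_t(x))$, the integrand $\pi^\beta q_t^{1-\beta}$ has $\beta$-derivatives equal to itself multiplied by $u$ and $u^2$, which gives
\[
  \frac{\dd}{\dd\beta}\log C_t(\beta) = \esp_{\pi_{\beta,t}}[u], \qquad
  \frac{\dd^2}{\dd\beta^2}\log C_t(\beta) = \var_{\pi_{\beta,t}}(u)\ge 0.
\]
Hence $\log C_t$ is convex, and therefore $f$ is convex.

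Finally, for the monotonicity claim I would compute $f'$ at the endpoint $\beta=1$: the formula above gives $f'(1) = -\KL(\pi\|q_t) + \esp_{\pi}[u] = 0$. Combined with convexity of $f$ on $[0,1]$, this forces $f'(\beta)\le 0$ for all $\beta\in[0,1]$, so $f$ is non-increasing on the full interval. The only subtlety worth double-checking is the differentiation under the integral sign when computing $C_t'$ and $C_t''$; this is routine under mild integrability of $\pi(\log(\pi/q_t))^2$ relative to $q_t$, and I would either assume such an integrability condition or note that the statement is understood on the set of $\beta$ where the quantities are finite. The rest of the argument is purely algebraic.
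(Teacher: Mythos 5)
Your proof is correct and follows essentially the same route as the paper's: the decomposition $\KL(\pi\|\pi_{\beta,t})=(1-\beta)\KL(\pi\|q_t)+\log C_t(\beta)$, the identification of the second derivative of $\log C_t$ as $\var_{\pi_{\beta,t}}\bigl(\log(\pi/q_t)\bigr)\ge 0$, and the endpoint computation $k'(1)=0$ combined with convexity to get monotonicity. Your explicit remark about differentiating under the integral sign is a minor point of additional care that the paper leaves implicit.
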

\begin{proof}[Proof of Proposition~\ref{pro:KLbeta}]
  Set for all $\beta$, $k(\beta)=\KL(\pi\|\pi_{\beta,t})$. We
  have
  \begin{align*}
    k(\beta) & = \int \pi(x) \log
               \frac{\pi(x)C_t(\beta)}{\pi^{\beta}(x)q_t^{1-\beta}(x)}
               \dd x
    = (1-\beta) \KL(\pi\|q_t) + \log C_t(\beta).
  \end{align*}
  Hence its first and second derivatives are
  \begin{align} \label{eq:kprime}
    k'(\beta) = - \KL(\pi\|q_t) + \frac{C'_t(\beta)}{C_t(\beta)},
    \quad
    k''(\beta) = \frac{C''_t(\beta)}{C_t(\beta)} - \left(\frac{C'_t(\beta)}{C_t(\beta)}\right)^2.
  \end{align}

  On the other hand, the first and second derivative of $C_t(\beta)$ are
  \begin{align*}
    C'_t(\beta) &= \int \pi^{\beta}(x)q_t^{1-\beta}(x)
                  \log \frac{\pi(x)}{q_t(x)}\dd x =
                  C_t(\beta) \, \esp_{\beta,t} \left(\log\frac{\pi(x)}{q_t(x)}\right),
    \\
    C''_t(\beta) &= \int \pi^{\beta}(x)q_t^{1-\beta}(x)
                   \log^2 \frac{\pi(x)}{q_t(x)}\dd x =
                   C_t(\beta) \,\esp_{\beta,t} \left(\log^2\frac{\pi(x)}{q_t(x)}\right).
  \end{align*}
  where $\esp_{\beta,t}$ is the expected value when $x\sim\pi_{\beta,t}(x)$.
  Thus, using \eqref{eq:kprime}, 
  \[
    k''(\beta) = \text{Var}_{\beta,t}\left(\log\frac{\pi(x)}{q_t(x)}\right) \ge 0
  \]
  and $k(\beta)$ is a convex function.
  
  Moreover, using \eqref{eq:kprime} again, we have
  \[
    k'(1) = - \KL(\pi\|q_t) + \frac{C'_t(1)}{C_t(1)} =
    - \KL(\pi\|q_t)  + \int \pi(x) \log\frac{\pi(x)}{q_t(x)}\dd x=0.
  \]
  Because of the convexity of $k$, for all $\beta\in[0,1]$, $ k'(\beta)\le k'(1)=0$.  Thus,
  $k(\beta)$ is decreasing and the proof is completed.
\end{proof}

The proposition given below is similar to the one of \cite{Beskos}, but the proof we give here deals
with finite samples.
\begin{pro}
\label{decreasing_ESS}
  Consider a collection of positive weights $w_i$, $i=1,\ldots,n$.
  The function $\beta \mapsto \ESS(\beta)$ defined by
  \[
     \ESS(\beta) =
      \left(\sum_{i=1}^{n} w_i^{\beta} \right)^2 \Big/
    \left(\sum_{i=1}^{n} w_i^{2\beta}\right)
  \]
  is decreasing.
\end{pro}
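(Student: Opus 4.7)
The plan is to take a logarithmic derivative of $\ESS(\beta)$ and reduce the desired monotonicity to the convexity of a log-sum-exp function. Setting $\phi(\beta) := \log \sum_{i=1}^n w_i^\beta$, one has $\log \ESS(\beta) = 2\phi(\beta) - \phi(2\beta)$, and differentiating in $\beta$ gives $(\log\ESS)'(\beta) = 2\bigl[\phi'(\beta) - \phi'(2\beta)\bigr]$. Since $2\beta \ge \beta$ for $\beta \ge 0$, showing that $\ESS(\beta)$ is non-increasing therefore boils down to proving that $\phi$ is convex on $[0,\infty)$.

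Next, I would establish the convexity of $\phi$ in the cleanest way possible: rewriting $w_i^\beta = \exp(\beta \log w_i)$, one sees that $\phi(\beta) = \log\sum_i \exp(\beta \log w_i)$ is the log-sum-exp of affine functions of $\beta$, which is a standard convex function. Equivalently, a direct computation yields $\phi''(\beta) = \var_{p(\beta)}(\log w_\bullet)$, where $p_i(\beta) := w_i^\beta / \sum_j w_j^\beta$ is the Gibbs-type probability distribution obtained by tilting with the $w_i^\beta$'s; this variance is manifestly non-negative, and in fact strictly positive whenever the $w_i$'s are not all equal, which would give strict monotonicity of $\ESS$ in that case.

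No step presents any real obstacle; the entire argument hinges on recognizing the log-sum-exp structure of $\phi$, after which the conclusion follows purely from convexity and the inequality $2\beta \ge \beta$. One minor point worth checking is that $\beta = 0$ is handled by continuity, since $\ESS(0) = n$ while $\ESS(\beta) \le n$ for all $\beta$ by Cauchy-Schwarz, consistent with the monotonicity.
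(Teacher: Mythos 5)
Your proof is correct, and it takes a genuinely different route from the paper's. The paper differentiates $\ESS(\beta)$ directly, writes the numerator of the derivative as a double sum over pairs $(i,j)$, and symmetrizes it into $\sum_{i<j} w_i^\beta w_j^\beta(\log w_j-\log w_i)(w_i^\beta-w_j^\beta)\le 0$, each term being nonpositive because $u\mapsto \log u$ and $u\mapsto u^\beta$ are both increasing --- a Chebyshev-sum/rearrangement argument. You instead pass to $\log\ESS(\beta)=2\phi(\beta)-\phi(2\beta)$ with $\phi(\beta)=\log\sum_i e^{\beta\log w_i}$ and reduce everything to the convexity of the log-sum-exp (equivalently, $\phi''(\beta)$ is the variance of $\log w$ under the tilted distribution $p_i(\beta)\propto w_i^\beta$), so that $\phi'(2\beta)\ge\phi'(\beta)$ kills the derivative's sign. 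The two arguments are secretly the same inequality --- the paper's symmetrized double sum is exactly a discrete covariance --- but your formulation is more structural: it avoids the index bookkeeping (where the paper's displayed $g(\beta)$ in fact contains index typos), it immediately yields strict decrease whenever the $w_i$ are not all equal, and it mirrors the paper's own Proposition~\ref{pro:KLbeta}, whose proof likewise identifies a second derivative with a variance under a tempered distribution. The only cosmetic caveat is that, like the paper, you prove ``non-increasing'' in general and ``strictly decreasing'' only for non-constant weights, which is all the statement can mean anyway; your boundary remark that $\ESS(0)=n$ dominates by Cauchy--Schwarz is consistent but not needed, since $\phi$ is smooth on all of $[0,\infty)$.
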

\begin{proof}
  If $x>0$, the derivate of $x^\beta$ with respect to $\beta$ is $x^\beta\log x$.
  Hence,
  \begin{align*}
    \frac{\dd}{\dd\beta}\ESS(\beta) &=
                                      \frac{\ds 2 g(\beta)\sum_{i=1}^nw_i^\beta}{\ds\left(\sum_{i=1}^n
    w_i^{2\beta}\right)^2}  \quad \text{where}
    \\
    g(\beta) &= \left(\sum_{i=1}^n w_i^{\beta}\log w_j\right)
    \sum_{j=1}^n w_j^{2\beta} - \left(\sum_{j=1}^nw_j^{2\beta}\log
      w_i\right) \sum_{i=1}^n w_i^{\beta}.
  \end{align*}
  Now,
  \begin{align*}
    g(\beta) & = \sum_{1\le i,j \le n}
               w_i^{2\beta}w_j^\beta \Big( \log w_j- \log
               w_i \Big)
    \\
    & = \sum_{1\le i < j \le n} w_i^\beta w_j^\beta  \Big( \log w_j- \log
      w_i\Big)  \Big( w_i^\beta- w_j^\beta \Big)
    \\
    &\le 0,
  \end{align*}
  since, for all $a,b>0$,
  \[
    a^{2\beta}b^\beta (\log b - \log a) +
    a^\beta b^{2\beta}(\log a - \log b) = a^\beta b^\beta \Big(a^\beta
    - b^\beta \Big)\log \frac ba \le 0. \qedhere
  \]
\end{proof}

\section{Proof of Proposition~\ref{pro:KL_hat_pi}}
\label{app:proof}
We start with this simple Lemma.
\begin{lem} \label{lem:Ebar}
  Let $f(x)$ and $g(x)$ be two densities on the $x$-space, which partitioned by
  $E \cup \bar E$.  Introduce the normalized densities knowing $x\in E$ or $\bar E$ as
  \[
    f_{|E}(x) = \frac{1}{f(E)}f(x)\mathbf 1_E(x), \quad
    f_{|\bar E}(x) = \frac{1}{f(\bar E)}f(x)\mathbf 1_{\bar E}(x)
  \]
  and likewise for $g_{|E}$ and $g_{|\bar E}$. We have
  \[
    \KL\big(f \big\| g \big) = f(E)\, \KL\big(f_{|E} \big\| g_{|E} \big) +
    f(\bar E)\, \KL\big(f_{|\bar E} \big\| g_{|\bar E} \big) +
    f(E) \log \frac{f(E)}{g(E)} + f(\bar E) \log \frac{f(\bar E)}{g(\bar E)}.
  \]
\end{lem}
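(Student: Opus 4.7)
The plan is to start from the definition of the Kullback divergence, split the domain according to the partition $\{E, \bar E\}$, and then rewrite the unnormalized densities on each piece in terms of the conditional densities and the masses $f(E), f(\bar E), g(E), g(\bar E)$.

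Concretely, I would first write
\[
\KL(f\|g) = \int_E f(x) \log\frac{f(x)}{g(x)}\,\dd x + \int_{\bar E} f(x) \log\frac{f(x)}{g(x)}\,\dd x,
\]
and then, on the piece integrated over $E$, substitute the identities $f(x) = f(E)\, f_{|E}(x)$ and $g(x) = g(E)\, g_{|E}(x)$, valid for $x \in E$ by the definition of the conditional densities. This lets me split the logarithm as
\[
\log\frac{f(x)}{g(x)} = \log\frac{f(E)}{g(E)} + \log\frac{f_{|E}(x)}{g_{|E}(x)}.
\]

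The next step is to pull the constant $\log(f(E)/g(E))$ out of the integral over $E$; the remaining integral of $f(x)$ over $E$ equals $f(E)$, producing the cross term $f(E)\log(f(E)/g(E))$. The other piece, $\int_E f(x) \log(f_{|E}(x)/g_{|E}(x))\,\dd x$, is rewritten as $f(E)\int_E f_{|E}(x)\log(f_{|E}(x)/g_{|E}(x))\,\dd x = f(E)\,\KL(f_{|E}\|g_{|E})$. I would then repeat the identical manipulation on the integral over $\bar E$ and sum the four resulting terms to obtain the claimed identity.

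There is no real obstacle here; the only thing to watch out for is the bookkeeping around the conditional densities being defined only on their respective supports (so that the substitutions $f = f(E) f_{|E}$ are justified only inside the corresponding integral) and to note that degenerate cases where $f(E) = 0$ or $f(\bar E) = 0$ are handled by the usual convention $0 \log 0 = 0$, in which case one of the conditional KL terms simply drops out. This is essentially the standard chain rule for KL divergence restricted to a binary partition, and the calculation is a few lines once the substitution is made.
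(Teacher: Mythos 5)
Your proof is correct and follows exactly the same route as the paper's: split the integral over the partition $\{E,\bar E\}$, substitute $f(x)=f(E)f_{|E}(x)$ and $g(x)=g(E)g_{|E}(x)$ on each piece, and separate the logarithm into the conditional KL term and the mass-ratio term. Your additional remark about the $0\log 0$ convention is a small bonus the paper does not spell out, but the argument is the same.
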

\begin{proof}
  We have
  \[
    \KL\big(f \big\| g \big) = \int_E f(x) \log \frac{f(x)}{g(x)}\dd x +
    \int_{\bar E} f(x) \log \frac{f(x)}{g(x)}\dd x.
  \]
  Moreover
  \begin{align*}
    \int_E f(x) \log \frac{f(x)}{g(x)}\dd x
    & =\int_E f(E) f_{|E}(x) \log \frac{f(E) f_{|E}(x)}{g(E) g_{|E}(x)}\dd x
    \\
    & = f(E)\int_E f_{|E}(x) \log \frac{f_{|E}(x)}{ g_{|E}(x)}\dd x
      +  f(E) \log \frac{f(E)}{g(E)} \\
    & = f(E)\, \KL\big(f_{|E} \big\| g_{|E} \big) +  f(E) \log \frac{f(E)}{g(E)}.
  \end{align*}
  Likewise,
  \begin{align*}
    \int_{\bar E} f(x) \log \frac{f(x)}{g(x)}\dd x
    &= f(\bar E)\, \KL\big(f_{|\bar E} \big\| g_{|\bar E} \big) 
           + f(\bar E) \log \frac{f(\bar E)}{g(\bar E)}. \qedhere                                                              
  \end{align*}
\end{proof}

\begin{proof}[Proof of Propostion \ref{pro:KL_hat_pi}]
  Using Lemma~\ref{lem:Ebar}, $\KL\Big( \pi_{\beta,t}\Big\|\widehat\pi_{\beta,t} \Big) = \KL_{I} + \KL_{II} + \KL_{III}$ where
  \begin{align*}
    \KL_{I} & = \pi_{\beta,t}(E) \KL\Big( \pi_{\beta,t}^E \Big\| q_t^E \Big)
    \\
    \KL_{II} & = \pi_{\beta,t}(\bar E) \KL\Big( \pi_{\beta,t}^{\bar E} \Big\| \pi_{\beta,t}^{\bar E}
               \Big) =0
    \\
    \KL_{III} & = \pi_{\beta,t}(E) \log\frac{\pi_{\beta,t}(E)}{\lambda} + (1-\pi_{\beta,t}(E)) \log \frac{1-\pi_{\beta,t}(E)}{1-\lambda}.
  \end{align*}
  Likewise,
  $\KL\Big( \pi_{\beta,t}\Big\|q_t\Big) = \KL_{I}' + \KL_{II}' + \KL_{III}'$ where
    \begin{align*}
    \KL_{I}' & = \pi_{\beta,t}(E) \KL\Big( \pi_{\beta,t}^E \Big\| q_t^E \Big) = \KL_{I}
    \\
    \KL_{II}' & = \pi_{\beta,t}(\bar E) \KL\Big( \pi_{\beta,t}^{\bar E} \Big\| q_t^{\bar E} 
               \Big) \ge 0 = \KL_{II}
    \\
    \KL_{III}' & = \pi_{\beta,t}(E) \log\frac{\pi_{\beta,t}(E)}{q_t(E)} + (1-\pi_{\beta,t}(E)) \log
                 \frac{1-\pi_{\beta,t}(E)}{1-q_t(E)}.
    \end{align*}
    Moreover, when $s\le 1$, $\lambda = s q_t(E) \le q_t(E)$, thus $\KL_{III}' \ge \KL_{III}$.
    Finally,
    \[
      \KL\Big( \pi_{\beta,t}\Big\|\widehat\pi_{\beta,t} \Big) = \KL_{I} + \KL_{II} + \KL_{III} \le
      \KL_{I}' + \KL_{II}' + \KL_{III}' = \KL\Big( \pi_{\beta,t}\Big\|q_t\Big). \qedhere
    \]
\end{proof}

\end{document}